\def\BibTeX{{\rm B\kern-.05em{\sc i\kern-.025em b}\kern-.08em
    T\kern-.1667em\lower.7ex\hbox{E}\kern-.125emX}}
\theoremstyle{definition}
\newtheorem{scenario}{Scenario}
\theoremstyle{definition}
\newtheorem{definition}{Definition}
\theoremstyle{remark}
\newtheorem{case}{Case}
\theoremstyle{remark}
\newtheorem{problem}{Problem}
\theoremstyle{definition}
\newtheorem{proposition}{Proposition}
\theoremstyle{remark}
\newtheorem{solution}{Solution}
\theoremstyle{definition}
\newtheorem{theorem}{Theorem}
\begin{document}

\title{Crash Consistency in DRAM-NVM-Disk Hybrid Storage System


}

\author{\IEEEauthorblockN{1\textsuperscript{st} Guoyu Wang}
\IEEEauthorblockA{\textit{College of Computer Science and Technology} \\
\textit{Jilin University}\\
Changchun, China \\
wgy21@mails.jlu.edu.cn
}
\and
\IEEEauthorblockN{2\textsuperscript{nd} Xilong Che}
\IEEEauthorblockA{\textit{College of Computer Science and Technology} \\
\textit{Jilin University}\\
Changchun, China \\
chexilong@jlu.edu.cn}
\and
\IEEEauthorblockN{3\textsuperscript{rd} Haoyang Wei}
\IEEEauthorblockA{\textit{College of Computer Science and Technology} \\
\textit{Jilin University}\\
Changchun, China \\
hywei23@mails.jlu.edu.cn}
\and
\IEEEauthorblockN{4\textsuperscript{th} Chenju Pei}
\IEEEauthorblockA{\textit{College of Computer Science and Technology} \\
\textit{Jilin University}\\
Changchun, China \\
peicj2121@mails.jlu.edu.cn}
\and
\IEEEauthorblockN{5\textsuperscript{th} Juncheng Hu$^*$}
\IEEEauthorblockA{\textit{College of Computer Science and Technology} \\
\textit{Jilin University}\\
Changchun, China \\
jchu@jlu.edu.cn} \thanks{$^*$Corresponding author. }
}

\maketitle

\begin{abstract}
NVM is used as a new hierarchy in the storage system, due to its intermediate speed and capacity between DRAM, and its byte granularity. However, consistency problems emerge when we attempt to put DRAM, NVM, and disk together as an efficient whole. In this paper, we discuss the challenging consistency problems faced by heterogeneous storage systems, and propose our solution to the problems. The discussion is based on NVPC as a case study, but can be inspiring and adaptive to all similar heterogeneous storage systems. 

\end{abstract}

\begin{IEEEkeywords}
Consistency, Heterogeneous storage, Non-volatile memory
\end{IEEEkeywords}

\section{Introduction}
With the emergence of non-volatile memory (NVM), various works have been proposed to leverage its byte-addressable and persistence characteristics. 
Among these approaches, the most compatible, efficient, and promising one is to use NVM to accelerate existing disk file systems \cite{linP2CACHEExploringTiered2023, wooStackingPersistentMemory2023, wangNVPCTransparentNVM2024}, in which case users can utilize NVM in their current system transparently without migration cost. However, we discover that the heterogeneity between DRAM, NVM, and disk makes it harder for these works to fully exploit the potential of NVM. 

NVPC \cite{wangNVPCTransparentNVM2024} is the only one of the accelerators that precisely accelerates only the slow path, i.e. sync writes, of the legacy storage stack, with no slow-down to its lower file system. We notice that its implementation abandons the layered design adopted by previous work, introducing an on-demand sync absorbing mechanism, which is the fundamental of its high efficiency. However, this design also introduces consistency problems between NVM and disk, which are only briefly and insufficiently discussed in that work. We observe that this consistency problem is common to a heterogeneous storage system with different write sequences and access granularity. Thus we believe that it is necessary to further discuss the consistency problems in such systems. 

In this work, we will first introduce the background of NVM-based heterogeneous storage systems.
Then based on NVPC, we will discuss the consistency problems and solutions. Specifically, for each heterogeneous characteristic, we will first figure out the exact problems that the system is facing. Then we will analyze the causes of the consistency problems. Finally, we will provide the solutions to each problem based on our analysis.

\section{Heterogeneous Storage Systems}

As the new hierarchy of the storage system, NVM has a byte-addressable access pattern, with intermediate speed and capacity between DRAM and SSD. The unique characteristics of NVM bring both chances and challenges to its users. Various works have been proposed to leverage NVM, such as NVM-based file systems \cite{dulloorSystemSoftwarePersistent2014, xuNOVALogstructuredFile2016, dongPerformanceProtectionZoFS2019}, tiered memories \cite{marufTPPTransparentPage2023, weinerTMOTransparentMemory2022}, and NVM-enabled databases \cite{jiFalconFastOLTP2023, ruanPersistentMemoryDisaggregation2023, wangPacmanEfficientCompaction2022}. However, when we put different storage layers with different characteristics together, the challenges become particularly noticeable. 

The early attempts to integrate heterogeneous storage devices are cross-media file systems. For example, Strata \cite{kwonStrataCrossMedia2017} and Ziggurat \cite{zhengZigguratTieredFile2019} are designed as monolithic file systems that use both DRAM, NVM, and disk, to provide a fast and a large capacity in the same time. Strategies are provided in these works to place data on the most suitable storage device. However, these cross-media file systems are usually complex and less mature than traditional disk file systems. The potential security risk and the data migration cost hinder their wide use. 

A novel approach is to utilize NVM as a transparent disk file system accelerator. SPFS \cite{wooStackingPersistentMemory2023} is an NVM-based overlay file system atop the disk file system, providing sync write prediction and absorption to eliminate the slow foreground I/O hang. Since SPFS is a separate layer before the DRAM page cache, when a write is executed, it should either be absorbed by SPFS or be left to the lower page cache and disk file system, at which point we don't know if there will be a sync next. If the decision is wrong, we may suffer from the long disk I/O on a false-positive, or experience unnecessary slow NVM write on a false-negative. The two-tier indexing also introduces extra slow-down to disk file systems. 

P2CACHE \cite{linP2CACHEExploringTiered2023} is another NVM-based overlay accelerator, trying to adopt both the fast read speed of DRAM and the persistency of NVM. The difference from SPFS is that P2CACHE puts DRAM and NVM into one layer. Each write on P2CACHE is stored to both the DRAM and the NVM. This double-write mechanism is easy to implement, and can provide a strong consistency. However, the cost of accelerating sync writes is the slow-down to all normal writes. 

NVPC \cite{wangNVPCTransparentNVM2024} is the latest work that provide transparent disk file system acceleration with no extra performance tax. The key of its efficient acceleration is the balance between heterogeneous storage devices. Specifically, reads, writes, and disk write-backs are always served by the fast DRAM. Only the slow sync will log relevant writes to the NVM. Meanwhile, NVPC provide an active sync strategy to efficiently absorb small sync writes with the byte-addressing ability of NVM. 

Though NVPC is the only accelerator that introduces no slow-down, the performance doesn't come for free. The NVM-absorbed sync writes, the async write-back to disk, the different write granularity between disk and NVM, and the different access modes (queued v.s. direct) between disk and NVM, lead to great challenges of the eventual consistency of data. The consistency problems and the solutions are only briefly and partially addressed in the paper of NVPC. We find that the problems are more complex than they seem to be, and worth to be discussed further. We are taking NVPC as a case study in this paper, but we believe that the discussions are also meaningful to the design of all heterogeneous storage systems that adopt different devices with different write sequences and different write granularity.

\section{Consistency Problems and Solutions in the System}

To analyze the consistency problems faced by heterogeneous storage systems, we need to first clarify the definition and the scope of a sync operation. Sync (e.g. fsync) on Linux or other POSIX-compatible systems forces all data for a file to be transferred to the storage device. The sync will not return until the data is successfully applied to the storage, at which point the storage reaches a consistent state with the DRAM page cache. For the special case, O\_SYNC, the difference is that there is a sync operation implicitly following each write operation on a file descriptor. However, its consistency model shows no difference with an explicit sync. We do not discuss the multi-descriptor or multi-thread behavior of sync and normal writes here, because it is somewhat undefined, and such ambiguity should be eliminated with locks by users. 

From the user's aspect, sync is usually used as a "barrier" on the storage system, implying that \textit{any event happens after the return of the sync can be based on the premise that the data before the sync has been persisted by the storage}. We will focus on the file system events, i.e. read, write, sync, and write-back here, because any other events that may influence the external visibility are based on the promise of the file system.

We also need to define what is a crash for the consistency discussion. When we refer to a crash, it usually means that the power is down at some point of time. After that, all volatile states of the system are vanished, and the only way to grab data is to access persistent storage devices. In the discussion scope of NVPC, the significant feature of crash is that after the crash, all page cache are lost, and NVPC needs to recover data from NVM and disk. 


\begin{definition}\label{def:0}
The event type set $E=\{read(r),\allowbreak write(w),\allowbreak sync,\allowbreak writeback(wb),\allowbreak crash,\allowbreak "(,)",\allowbreak "[,]"\}$.
\end{definition}

\begin{definition}\label{def:1}
The sequence "$(,)$" on the set of events of a file system defines such a relation: $a,b,c\in E$, if $a$ happens before $b$, and $b$ happens before $c$, and there are no $crash$ event besides $a$, $b$, and $c$, then $(a,b,c)$. The number of events in "$(,)$" is variable. 
Note that "$(,)$" only defines the order of the events inside, but does not imply that the events should happen adjacently, i.e, $(read, write, sync, crash, write) \Leftrightarrow (read, crash, write)$.
\end{definition}

\begin{definition}\label{def:2}
The sequence "$[,]$" on the set of events of a file system defines such a relation: if $(a,b,c)$, and any event $X$ besides $a$, $b$, and $c$ between the first element $a$ and the last element $c$ satisfy $X\not\in \{write,sync,writeback\}$, then $[a,b,c]$. The number of events in "$[,]$" is variable. 
\end{definition}

\begin{theorem}\label{theo:1}
If $(write,sync,crash,read)$, then when $read$ is performed, the data on the storage should be no earlier than the $write$.
\end{theorem}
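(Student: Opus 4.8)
\section*{Proof Proposal}

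The plan is to prove the statement by reducing it to the durability semantics of $sync$, the crash model, and the correctness of NVPC's recovery, and I would organize it as a contradiction. First I would unpack what the relation $(write, sync, crash, read)$ supplies through Definition~\ref{def:1}: the four events occur in this temporal order, the listed $crash$ is the \emph{only} crash in the window (so no second power loss can intervene), and by the non-adjacency remark arbitrary non-crash events may be interleaved between them. I would then assume for contradiction that the $read$ returns data strictly older than the value installed by $write$, and show this violates one of these three pillars.

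The forward chain has three links. (i) Since $write$ precedes $sync$, the value of $write$ is part of ``the data before the sync''; by the barrier semantics recalled above, when $sync$ returns this value has been successfully applied to persistent media (the NVM sync log and/or disk), so it is durable at the instant $sync$ completes. (ii) The $crash$ occurs after $sync$ returns; by the crash model only volatile state---the DRAM page cache---is lost, while NVM and disk retain their contents, so the durable value of $write$ survives the $crash$. (iii) After the $crash$ NVPC rebuilds the file state from NVM and disk before the $read$ is served; hence the recovered state is a function of media that still contain the value of $write$, and the $read$ observes this recovered state. Chaining (i)--(iii) contradicts the assumption, because at no point can a persisted value be demoted to an older one.

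Before concluding I would discharge the events that the relation $(,)$ permits between the named four. Writes and write-backs occurring between $write$ and $sync$ only move the logical version forward and are themselves flushed by the terminating $sync$; writes occurring between $sync$ and $crash$ may be lost together with the page cache, but (subject to the recovery property discussed below) such a loss can only leave the storage at the synced version, never below it, which still satisfies ``no earlier than $write$'' since that phrase is a lower bound. Thus none of the interleaved events can produce a counterexample.

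The hard part is link (iii): recovery correctness under the NVM--disk heterogeneity that motivates the paper. Because NVPC absorbs the $sync$ into a byte-granular NVM log while normal writes and asynchronous write-backs reach disk at block granularity and through a different (queued vs.\ direct) path, a crash can leave the disk holding a block whose bytes are \emph{older} than the logged write, or only partially written back. The crux of the argument is therefore to show that recovery deterministically lets the NVM log \emph{shadow} any stale or torn disk block covering the same region, so that the reconstructed value is never the pre-$write$ disk version. Establishing this shadowing property---that every synced region has a discoverable, self-consistent log entry that recovery replays over disk---is where the real work lies, and it is exactly the guarantee the heterogeneous design must provide; the remaining links then follow directly from the $sync$ and crash definitions already fixed in this section.
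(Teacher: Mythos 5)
There is a genuine mismatch here, both with the paper and internally. The paper does not prove Theorem~\ref{theo:1} at all: immediately after stating it, the paper declares that Theorem~\ref{theo:1} and Theorem~\ref{theo:2} ``are axiomatic according to the definition of sync operation.'' The word \emph{should} in the statement marks it as normative --- it is the \emph{specification} of sync semantics (what a correct storage system must guarantee), not a descriptive claim about what NVPC actually does. Its role in the paper is to serve as the correctness criterion against which concrete designs are later judged in Propositions~\ref{prop1}--\ref{prop3}. Your proposal inverts this logical role: you try to derive the statement from the behavior of NVPC's recovery, which is precisely the thing the paper treats as under question.

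This inversion produces a concrete gap in your argument. Your link~(iii) --- that recovery after a crash reconstructs a state containing every synced write --- is not a given; it is false for a naive heterogeneous design, and the paper exhibits explicit counterexamples: in Scenario~\ref{scene13}, adopting the NVM copy after a crash returns $V2$ even though $V3$ was synced (via the no-op sync $O6$), and in Scenario~\ref{scene22}, rebuilding from NVM records loses the synced data of $O3$ even with Solution~\ref{sol1} in place. So the ``shadowing property'' you defer to the end cannot be established in the generality your proof needs; it only holds once the specific machinery of Solutions~\ref{sol1}--\ref{sol3} is added, and proving that it then holds is exactly the content of Propositions~\ref{prop1}--\ref{prop3}, not of Theorem~\ref{theo:1}. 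You correctly identify that this is ``where the real work lies,'' but the work is neither done in your proposal nor doable as stated; the honest resolution is that Theorem~\ref{theo:1} requires no proof because it is the definition of the contract, and the system-dependent argument you sketch belongs to the propositions later in the paper.
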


\begin{theorem}\label{theo:2}
The sync semantics of a file system are not violated, if and only if for each $(write,sync,crash,read)$, Theorem \ref{theo:1} is respected.
\end{theorem}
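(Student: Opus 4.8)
The plan is to treat Theorem~\ref{theo:2} as a characterization result and prove the two directions of the ``if and only if'' separately. The forward direction will be an almost immediate application of the sync ``barrier'' semantics, while the backward direction will require a reduction argument over the event alphabet of Definition~\ref{def:0}. Before starting either direction, I would fix precise statements of the two sides. For the left-hand side I take the informal barrier promise stated in the prose---\emph{any event that happens after the return of a sync may assume that all data written before that sync has been persisted by the storage}---as the formal meaning of ``the sync semantics are not violated.'' For the right-hand side I take the family of conditions asserted by Theorem~\ref{theo:1}, universally quantified over every occurrence of the pattern $(write,sync,crash,read)$.

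For the forward direction ($\Rightarrow$), I would argue directly. Given any $(write,sync,crash,read)$, the $write$ precedes the $sync$, so by the barrier promise the data of that write is persistent once the $sync$ returns. The intervening $crash$ destroys only volatile state and leaves persistent storage intact, and the subsequent $read$ can, after a crash, draw data only from persistent storage. Hence the $read$ must observe data no earlier than the $write$, which is exactly the conclusion of Theorem~\ref{theo:1}.

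For the backward direction ($\Leftarrow$), I would argue by contraposition, assuming the barrier promise is violated and constructing a witnessing pattern. A violation means some event $X$ occurring after a sync returns can no longer rely on the pre-sync data being persisted, i.e. that data has effectively been lost. Here I would enumerate the event types in $E$: a $write$ or a $writeback$ only deposits data, and a $sync$ only acts as a barrier, so none of these can by itself expose that earlier data is absent; the only volatile-state-destroying event is $crash$, so the loss can originate only from a $crash$ lying between the $sync$ and $X$; and the only event that can externally reveal the contents of storage---and thus detect the loss---is $read$. Therefore any violation condenses to an instance $(write,sync,crash,read)$ in which the $read$ returns data earlier than the $write$, contradicting the assumed validity of Theorem~\ref{theo:1} for every such pattern. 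This step would invoke the remark in Definition~\ref{def:1} that interposed events do not affect the recorded ordering, so the witness can always be written in the four-event form.

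The main obstacle I expect is precisely this backward reduction: making rigorous the claim that every conceivable violation of the general, unrestricted barrier promise must manifest through exactly the four-event pattern. This requires an exhaustive case analysis over $E$ to certify that (a) a $crash$ is the sole mechanism of persistence loss and (b) a $read$ is the sole mechanism of observing that loss, together with a separate argument that when no $crash$ intervenes the DRAM page cache keeps the post-sync state consistent, so the promise holds vacuously in that case. Getting the quantifiers right---ranging over all writes preceding a given sync and all reads following a given crash---will be the delicate part.
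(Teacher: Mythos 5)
The paper does not actually prove this statement: immediately after stating it, the authors simply declare that ``Theorem \ref{theo:1} and Theorem \ref{theo:2} are axiomatic according to the definition of sync operation,'' and move on. Your proposal therefore takes a genuinely different route---it supplies the argument the paper elides. Your forward direction matches what the paper implicitly assumes (Theorem \ref{theo:1} is just the barrier promise specialized to a post-crash read), and your backward direction, reducing any violation of the barrier promise by exhaustion over the event alphabet $E$ to a witnessing $(write,sync,crash,read)$ pattern, is the substantive content: it is exactly the justification needed for why Propositions \ref{prop1}--\ref{prop3} can discharge their obligations by checking only these four-event patterns, a point the paper leaves implicit. Two caveats on your argument. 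First, since the paper never formalizes ``the sync semantics of a file system'' beyond the prose barrier description, your backward direction cannot be made fully rigorous; it is an unfolding of informal definitions, which is presumably why the authors chose to axiomatize rather than prove. Second, your contraposition quietly identifies ``violation'' with ``observable violation'' (one detected by a $read$); this is defensible---the paper itself restricts attention to externally visible behavior, stating that all other external effects ``are based on the promise of the file system''---but you should state that identification explicitly, since otherwise a silent, never-read loss of synced data would escape your case analysis while still arguably violating the promise.
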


Theorem \ref{theo:1} and Theorem \ref{theo:2} are axiomatic according to the definition of sync operation. 

In NVPC or other similar heterogeneous storage systems, we would expect that the sync semantics are not violated. However, since we have two (or more) write destination devices with different write sequences and different access granularity, consistency problems emerge. The following discussion will be based on NVPC, but the same problems and solutions also apply to other heterogeneous storage systems.

\subsection{Problems from Write Sequence Difference}\label{sec:seq}

\setcounter{scenario}{0}
\setcounter{case}{0}

To maximize the performance of the system while maintaining the capacity, some data should be written to the NVM, and some data to the disk. Specifically, in NVPC, those write requests with urgent persist needs are directed to the NVM, while other writes obey the async write-back rule from the DRAM page cache to the disk. This leads to different write sequences on NVM and disk. In this section, we suppose that the NVM and the disk have the same write granularity, which is a page, to demonstrate the problems brought by different write sequences.

\begin{figure*}[t]
\centerline{\includegraphics{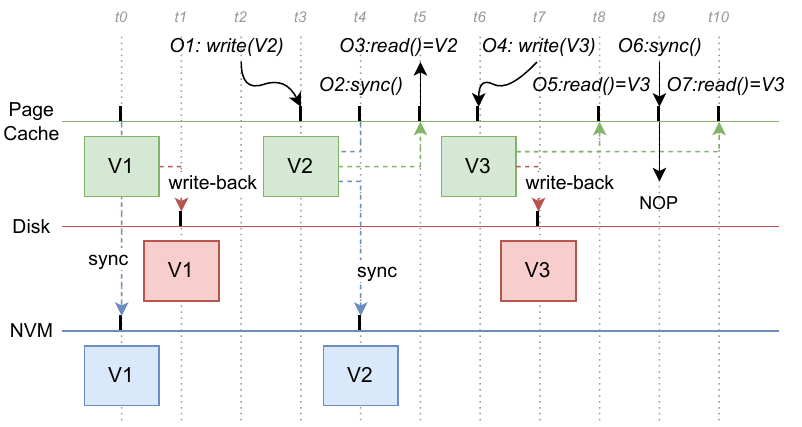}}
\caption{Space-time diagram for problematic cases caused by variant write sequence. }
\label{fig1}
\end{figure*}

Fig. \ref{fig1} shows a workflow on NVPC. At $t2$, both page cache, disk, and NVM reach a consistent state. Then as we perform operations on NVPC, data flows to different places as expected. However, when we inject crashes into the timeline, we will find some problematic cases caused by the different write sequences between disk and NVM. 

\begin{scenario}\label{scene11}
If we crash at $t5$, then after recovery, we have $V1$ on disk and $V2$ on NVM. It seems like NVM has successfully stored the newer data version ($V2$) because of the sync ($O2$). However, since we don't have a timestamp on the data, we don't know whether the disk version ($V1$) or the NVM version ($V2$) is fresher. From the diagram we can see that $V2$ is fresher, so let's assume that we adopt the data on NVM after a crash. In such case, if we perform $O3$ after the crash, we have $(O1,sync,crash,O3)$, and $O3$ returns $V2$ from the NVM, which respects Theorem \ref{theo:1}. Thus according to Theorem \ref{theo:2}, the sync semantics are not violated. 
\end{scenario}

\begin{scenario}\label{scene12}
If we crash at $t8$, then after recovery, we have $V3$ on disk and $V2$ on NVM. Now the disk has the fresher version. If we follow the assumption in Scenario \ref{scene11}, then after the crash when we perform $O5$ we adopt $V2$ on NVM instead of $V3$ on disk. Now we encounter a data rollback because we are replacing newer $V3$ with older $V2$. But it still doesn't violate the sync semantics, because there is no sync between $O4$ and $O5$, i.e. $(O4,\lnot sync,crash,O5)$.
\end{scenario}

\begin{scenario}\label{scene13}
If we crash at $t10$, then after recovery, we have $V3$ on disk and $V2$ on NVM, just like Scenario \ref{scene13}. However, the difference is that we performed a sync ($O6$) before the crash. Since the page is not dirty at $t9$, $O6$ will return with nothing done. But when we perform $O7$ after the crash, following the assumption in Scenario \ref{scene11}, we will get $V2$ from the NVM instead of the $V3$ on disk. This violates Theorem \ref{theo:1}, because $(O4,sync,crash,O7)$, but the data of $O4$ ($V3$) is lost if we perform $O7$ after a crash. The only way to fix this is to adopt the data version on the disk, but this in turn leads to the violation of Theorem \ref{theo:1} in Scenario \ref{scene11}.
\end{scenario}

\begin{problem}\label{prob1}
The different write sequences of disk and NVM, e.g. $V1\rightarrow V3$ and $V1\rightarrow V2$ in Fig. \ref{fig1}, may lead to two different final versions of data on the two storage devices. Since we don't know which one is fresher, choosing any one of them as the final version can violate Theorem \ref{theo:1}, and thus violates the sync semantics. 
\end{problem}

\begin{solution}\label{sol1}
Since Problem \ref{prob1} is caused by the missing version information for disk and NVM data, we can solve it by adding the lost information. Specifically, we can maintain a global persistent variable in the NVM for each page to indicate which storage device has the latest data version of that page. Then we can choose the right data according to this information after the crash. The process is described in Algorithm \ref{alg1}.
\end{solution}

\algdef{SE}[VARIABLES]{Variables}{EndVariables}
   {\algorithmicvariables}
   {\algorithmicend\ \algorithmicvariables}
\algnewcommand{\algorithmicvariables}{\textbf{global variables}}
\algnewcommand{\LineComment}[1]{\State \(\triangleright\) #1}

\begin{algorithm}
\caption{Solution to problem 1: maintain a persistent variable for each data page. }
\label{alg1}

\begin{algorithmic}[1]

\Variables
    \State $latest\_dev[1\ldots page\_num]$ \Comment{Device that stores the latest data. Persistent.}
\EndVariables

\Procedure{writeback}{$page$}

    \State \Call{write\_disk}{$page$}
    \State $latest\_dev[page.index] \gets \mathrm{DISK}$ 

\EndProcedure

\Procedure{\textsc{sync}}{$page$}

    \If{$page$ is dirty}
        \State \Call{write\_nvm}{$page$}
        \State $latest\_dev[page.index] \gets \mathrm{NVM}$ 
    \EndIf
\EndProcedure

\Function{crash\_recover}{$page\_index$}
    \If{$latest\_dev[page\_index] = \mathrm{DISK}$ }
        \State \Return \Call{read\_disk}{$page\_index$}
    \ElsIf{$latest\_dev[page\_index] = \mathrm{NVM}$ }
        \State \Return \Call{read\_nvm}{$page\_index$}
    \EndIf
\EndFunction

\end{algorithmic}
\end{algorithm}

\begin{proposition}\label{prop1}
Solution \ref{sol1} solves Problem \ref{prob1}. 
\end{proposition}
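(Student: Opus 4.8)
The plan is to reduce the claim to Theorem \ref{theo:2}, which tells us it is enough to verify Theorem \ref{theo:1} for every pattern $(write,sync,crash,read)$ acting on a single page under Algorithm \ref{alg1}. So I would fix one such pattern, write $W$ for the distinguished $write$ and $S$ for the distinguished $sync$ (with $W$ before $S$ and $S$ before the $crash$), and then track the persistent value $latest\_dev$ for that page from the completion of $S$ up to the $crash$. The goal is to show that \textsc{crash\_recover} returns a page version no earlier than $W$.

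The core of the argument is the invariant: \emph{from the instant $S$ completes until the crash, $latest\_dev$ names a device whose stored page content is no earlier than $W$.} For the base case I would split on whether the page is dirty when $S$ runs. If it is dirty, \textsc{sync} flushes the live page-cache content---necessarily no earlier than $W$, since writes only advance the version---to NVM and sets $latest\_dev=\mathrm{NVM}$. If it is clean, then no write occurred since the last persist operation before $S$, so that operation flushed a version no earlier than $W$ to its device and set $latest\_dev$ to it, while \textsc{sync} does nothing. This clean-page branch is exactly the situation of Scenario \ref{scene13}, where the naive ``always read NVM'' policy fails and the recorded $latest\_dev=\mathrm{DISK}$ is what rescues correctness. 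Either way the invariant holds at the base.

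For the inductive step I would examine each event that can alter persistent state or $latest\_dev$ between $S$ and the crash. A $write$ only advances the page cache, leaving persisted data and $latest\_dev$ untouched, so the invariant survives. A $writeback$ flushes the current (hence no-earlier-than-$W$) page-cache content to disk and re-points $latest\_dev=\mathrm{DISK}$. A $sync$ either flushes such content to NVM and sets $latest\_dev=\mathrm{NVM}$, or finds the page clean and changes nothing. In every case the invariant is preserved, so at the crash $latest\_dev$ still names a device holding data no earlier than $W$; \textsc{crash\_recover} reads that device and returns such data, establishing Theorem \ref{theo:1}. As the pattern was arbitrary, Theorem \ref{theo:2} delivers that the sync semantics are respected, proving Proposition \ref{prop1}.

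I expect the real obstacle to lie in the base case rather than the induction. The delicate point is the clean-page branch of $S$: one must argue that a clean page forces the most recent preceding persist to have occurred after $W$, so that the device it recorded in $latest\_dev$ genuinely holds a version no earlier than $W$---precisely the gap that Scenario \ref{scene13} exposes. The whole argument also rests on two modeling assumptions I would state explicitly, both implicit in Section \ref{sec:seq}: that writes produce monotonically newer versions and that both \textsc{writeback} and \textsc{sync} persist the live page-cache content, and, crucially, that each data write together with its paired update of $latest\_dev$ is atomic at page granularity. Without this pairing being atomic, a crash landing between \textsc{write\_disk} and the assignment $latest\_dev\gets\mathrm{DISK}$ would need separate treatment, so I would note that the page-granularity assumption of this section is what lets me treat the pair as a single indivisible step.
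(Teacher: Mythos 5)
Your proof is correct, and it takes a structurally different route from the paper's. The paper first reduces the claim to the \emph{last} $(w,sync,crash,r)$ pattern (using the page-granularity observation that persisting the page persists all earlier writes), and then enumerates three interleavings under the ``$[,]$'' restriction --- $[w,wb,sync,crash,r]$, $[w,sync,wb,crash,r]$, and $[w,sync,crash,r]$ --- tracing Algorithm \ref{alg1} through each. You instead prove an invariant (after the distinguished $sync$ completes, $latest\_dev$ always names a device holding data no earlier than $W$) and maintain it by induction over every subsequent $write$, $writeback$, and $sync$ until the crash. The two arguments hinge on the same insight, that Algorithm \ref{alg1} repoints $latest\_dev$ exactly when a device receives a fresh full-page copy; your clean-page base case is the paper's Case \ref{case11}, and your inductive steps for $wb$ and $sync$ are its Cases \ref{case12} and \ref{case13}. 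What your formulation buys is genuine exhaustiveness: the paper's three ``$[,]$''-restricted cases do not literally cover sequences with multiple write-backs or with write-backs both before and after the sync (e.g.\ $[w,wb,sync,wb,crash,r]$), whereas your induction absorbs arbitrary interleavings without the reduction-to-last-pattern step. You also surface an assumption the paper leaves silent --- that a device write and its paired $latest\_dev$ update behave atomically across a crash --- which is a real gap in Algorithm \ref{alg1} as stated (a crash between \textsc{write\_disk} and the assignment would leave $latest\_dev$ stale, though pointing at NVM data that still satisfies Theorem \ref{theo:1} in that particular window). What the paper's enumeration buys in exchange is brevity and concreteness: each case is a direct execution trace that a reader can check against Fig.~\ref{fig1} without carrying an invariant.
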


\begin{figure*}[t]
\centerline{\includegraphics{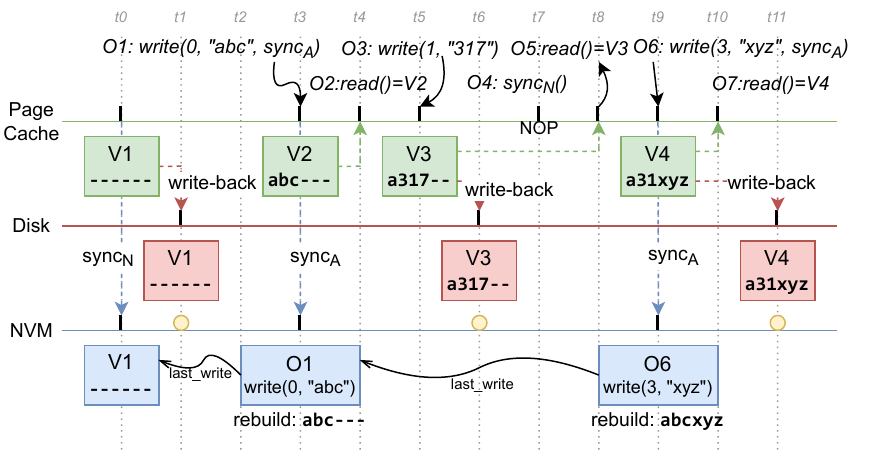}}
\caption{Space-time diagram for problematic cases caused by variant write granularity. } 
\label{fig2}
\end{figure*}

\begin{proof}
To prove that all $(w,sync,crash,r)$ respect Theorem \ref{theo:1}, we can simply prove that the last $(w,sync,crash,r)$ respects Theorem \ref{theo:1}, because when $w$ is persisted, the whole page with the data version no earlier than $w$ is persisted, meaning that the results of all previous writes are persisted. 
Now $(w,sync,crash,r)$ has and only has the following three possible cases to prove in the context of Problem 
\ref{prob1}:

\begin{case}\label{case11}
A write-back $wb$ happens after the write $w$ before the $sync$ operation is called. We mark it as $[w,wb,sync,crash,r]$. In this case, according to Algorithm \ref{alg1}, $wb$ writes the data of $w$ to the disk, marks the page as clean, and marks the page's $latest\_dev$ as DISK. The $sync$ needs to do nothing on the already-cleaned page. Then after the $crash$, the $r$ will read the recovered data from DISK, which is the data of $w$, respecting Theorem \ref{theo:1}.
\end{case}
\begin{case}\label{case12}
A write-back $wb$ happens after the $sync$ before the $crash$. We mark it as $[w,sync,wb,crash,r]$. In this case, the $sync$ first writes the data of $w$ to the NVM and marks the page's $latest\_dev$ as NVM. Then $wb$ writes the data again to the disk and marks the page's $latest\_dev$ as DISK. Then after the $crash$, the $r$ will read the recovered data from DISK, which is the data of $w$, respecting Theorem \ref{theo:1}.
\end{case}
\begin{case}\label{case13}
The $crash$ occurred before the write-back $wb$ would have a chance to perform. We mark it as $[w,sync,crash,r]$. In this case, the $sync$ writes the data of $w$ to the NVM and marks the page's $latest\_dev$ as NVM. Then after the $crash$, the $r$ will read the recovered data from NVM, which is the data of $w$, respecting Theorem \ref{theo:1}.
\end{case}

According to Theorem \ref{theo:2}, now that all cases respect Theorem \ref{theo:1}, Algorithm \ref{alg1} is proved to solve Problem \ref{prob1}.
\end{proof}

\subsection{Problems from Write Granularity Difference}
\label{sec:grn}

\setcounter{scenario}{0}
\setcounter{case}{0}

To leverage the byte-addressable characteristic of NVM, NVPC introduces small sync writes. Thus on NVM, data can be written with any length, which is different from the block granularity of disk. Such arbitrary write length support can improve the performance of small sync writes, but brings further consistency problems to the system.

Fig. \ref{fig2} shows a workflow on NVPC with arbitrary write length support. At $t2$, both page cache, disk, and NVM reach a consistent state. Then as we perform operations on NVPC, data flows to different places correctly. The difference with Fig. \ref{fig1} is that the write to the NVM is no longer fixed to a page. For each NVM record, it can persist a sync write event with any write length below a page. Note that for ease of management, the data should not cross page boundaries, or it should be divided into multiple records according to page boundaries. Also note that NVPC will perform a whole-page sync before starting a series of arbitrary length sync writes, so that previous writes are not lost.
We can mark whole-page normal sync as $sync_N$, arbitrary-length sync write as $[w,sync_A]$, and a series of arbitrary length sync writes as $(sync_N,\lnot w,[w,sync_A]^n),n\ge 1$.
To rebuild the whole page data from NVM, NVPC has to backtrack all partial writes on the same page. Now when we inject crashes into the timeline, more problematic cases occur, due to the different write granularity between disk and NVM.

\begin{scenario}\label{scene21}
Without Solution \ref{sol1}, the workflow in Fig. \ref{fig2} still faces the same consistency problems as Scenario \ref{scene11} and Scenario \ref{scene12}. E.g. when a crash happens at $t4$ or $t8$, NVPC will have trouble with choosing data between the disk version and the NVM version. By adding Solution \ref{sol1}, these problems can be solved similarly. The proof is the same and is omitted. 
\end{scenario}

\begin{scenario}\label{scene22}
Suppose that we have adopted Solution \ref{sol1}. Now if we crash at $t10$, then after the recovery we have $V3$ (\texttt{a317-{}-}) on disk, and $V1\rightarrow O1\rightarrow O6$ on NVM. By Algorithm \ref{alg1} we will choose the NVM as the latest data version, and from the NVM we can rebuild \texttt{abcxyz}. We have $(write_{O6},sync_{O6},[crash,O7])$ and the data of $O6$ (\texttt{xyz}) is successfully applied to the final version data, so that $O7$ can read it correctly, respecting Theorem \ref{theo:1}. However, the final result \texttt{abcxyz} is actually violating Theorem \ref{theo:1}. 
Because the arbitrary write length on NVM means that $O6$'s write may not overwrite previous write $O3$, and we need to also check $(O3,sync_{O4},crash,O7)$ in this workflow.
When we rebuild \texttt{abcxyz} from the NVM we only have the outcome of $O1$ and $O6$, but $O3$'s data \texttt{317} is lost. Thus $(O3,sync_{O4},crash,O7)$ violates Theorem \ref{theo:1} even if Solution \ref{sol1} is adopted.
\end{scenario}

\begin{problem}\label{prob2}
The problem we are facing in Scenario \ref{scene22} is that the write sequence and the write granularity of NVM are both different from disk, e.g. in Fig. \ref{fig2}, the disk has $V1\Rightarrow V3$, and the NVM has $V1\Rightarrow O1$(part of $V2$)$\Rightarrow O6$(part of $V4$). Thus when we retrieve data from NVM, sync data that is persisted on disk by write-back may lost, because later NVM records do not take charge of previous writes that are out of its range. E.g. at $t6$ $O3$ is successfully recorded by the disk, while at $t9$ the NVM only records $O6$, but not the full $V4$, leading to the loss of $O3$.
\end{problem}

\begin{solution}\label{sol2}
It is worth noting that an NVM sync write does not record data out of its range, but a disk write-back records all current data of a page, including any previous writes that may influence the current data version. Thus we don't need to worry about the disk, and should focus on the unrecorded hollows of a page on the NVM. Now we record each disk write-back event on the NVM, like the yellow bubbles shown in Fig. \ref{fig2}. The bubble represents that at that point of time the disk has maintained a full data version of the page, including the results of all previous writes. So when we backtrack the NVM records, whenever we meet a write-back event we will know that there is a reliable data copy on the disk, and we don't need to backtrack on the NVM any further. Then to rebuild the latest data version, we just apply the tracked useful NVM operations to the disk data version. This process is described in Algorithm \ref{alg2}. 
\end{solution}

\begin{algorithm}
\caption{Solution to problem 2: record write-back events on the NVM. }
\label{alg2}

\begin{algorithmic}[1]

\Procedure{writeback}{$page$}

    \State \Call{write\_disk}{$page$}
    \State $record.type\gets \mathrm{WRITEBACK}$ 
    \State $record.page\_index\gets page.index$
    \State \Call{write\_nvm}{$record$}

\EndProcedure

\Procedure{sync\_write}{$page\_index$, $data$, $off$, $len$}
    
    \State $record.type\gets \mathrm{WRITE}$ 
    \State $record.page\_index\gets page\_index$
    \State $record.data\gets data$
    \State $record.off\gets off$
    \State $record.len\gets len$
    \State \Call{write\_nvm}{$record$}

\EndProcedure

\Function{crash\_recover}{$page\_index$}
    \State $record \gets $ \Call{last\_nvm\_record\_on}{$page\_index$}
    \State $useful\_records \gets [\;]$
    \State $i \gets 0$
    \While{$record \ne \mathrm{NULL}$ \textbf{and} $record.type \ne \mathrm{WRITEBACK}$}
        \State $useful\_records[i] \gets record$
        \State $i = i+1$
        \State $record \gets $ \Call{prev\_nvm\_record\_on\_same\_idx}{\allowbreak$record$}
    \EndWhile
    \State $page \gets $ \Call{read\_disk}{$page\_index$}
    \For{$i > 0$}
        \State $i = i-1$
        \State \Call{write\_page}{$page$, $record.data$, $record.off$, $record.len$}
    \EndFor
    \State \Return $page$
\EndFunction

\end{algorithmic}
\end{algorithm}

\begin{proposition}\label{prop2}
Solution \ref{sol2} solves both Problem \ref{prob1} and Problem \ref{prob2}. 
\end{proposition}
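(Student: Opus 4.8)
The plan is to again invoke Theorem \ref{theo:2} and reduce the claim to showing that every $(w,sync,crash,r)$ on a page respects Theorem \ref{theo:1} under the recovery of Algorithm \ref{alg2}. The essential new difficulty, compared with the proof of Proposition \ref{prop1}, is that the reduction ``it suffices to treat the last $(w,sync,crash,r)$'' is no longer available: with byte-granular NVM records a later sync write need not cover the range of an earlier one (this is exactly the source of Problem \ref{prob2}), so I must argue separately for each synced write $w$, and do so range by range. Concretely, for a write $w$ touching byte range $R$, I will show that the page returned by \textsc{crash\_recover} holds, on every byte of $R$, a version no earlier than $w$ (either $w$'s own value or that of a still later write to the same byte, which is also no earlier than $w$ and therefore still satisfies Theorem \ref{theo:1}).

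First I would isolate and justify the property that makes the WRITEBACK markers sound, which I call \emph{write-back completeness}: because in NVPC every read, write and write-back is served by the DRAM page cache, the cache always holds the current version of the page, so a write-back copies the entire up-to-date page to disk. Hence at the moment any WRITEBACK record is emitted on the NVM, the disk holds a page version that dominates, on every byte, every write preceding that record. This is what licenses Algorithm \ref{alg2} to stop backtracking at the last WRITEBACK record and to take the disk copy as the recovery base: all NVM writes lying before that barrier are already subsumed by the disk image, so discarding them is safe, while every NVM write after the barrier is collected into \texttt{useful\_records} and replayed in chronological (oldest-to-newest) order on top of the disk base.

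With completeness in hand, I would fix an arbitrary $(w,sync,crash,r)$ with $w$ writing range $R$ and split on the position of the NVM record $\rho$ that the intervening $sync$ produces for $w$ (a WRITE record covering $R$ with a version no earlier than $w$). In Case~A no write-back occurs after $\rho$ before the crash; then $\rho$ follows the last WRITEBACK record, so $\rho\in\texttt{useful\_records}$ and is replayed, giving the recovered page a version no earlier than $w$ on $R$, unless a still later NVM write overwrites part of $R$, which only advances the version. In Case~B some write-back occurs after $\rho$; then by write-back completeness the last WRITEBACK before the crash already wrote, on all of $R$, a version no earlier than $w$ to disk, so the disk base alone already meets the requirement on $R$, and the post-barrier NVM writes replayed afterward can only overwrite bytes of $R$ with still later values. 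Either way the recovered page is no earlier than $w$ on $R$, so $(w,sync,crash,r)$ respects Theorem \ref{theo:1}; since $w$ was arbitrary, Theorem \ref{theo:2} yields that Solution \ref{sol2} solves Problem \ref{prob2}.

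Finally, for Problem \ref{prob1} I would observe that it is the full-page special case of the same argument: when every sync and every write-back operates at page granularity, each WRITE record covers the whole page, so replaying \texttt{useful\_records} on the disk base reduces to selecting, per page, whichever of the last NVM sync or the disk-at-last-write-back is more recent---exactly the decision that Solution \ref{sol1}'s $latest\_dev$ encodes, whose correctness is Proposition \ref{prop1}. Hence Solution \ref{sol2} subsumes Solution \ref{sol1} and also solves Problem \ref{prob1}. I expect the main obstacle to be the second paragraph: stating write-back completeness precisely and using it to prove that dropping every NVM record before the last WRITEBACK barrier never loses a byte required by Theorem \ref{theo:1}, since this is the step where the page-granularity shortcut of Proposition \ref{prop1} genuinely breaks down and must be replaced by a byte-range composition of the disk base with the replayed post-barrier NVM writes.
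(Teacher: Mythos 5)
Your proposal is correct, and it establishes the same conclusion as the paper but by a genuinely different decomposition. The paper's proof enumerates four concrete event patterns built around NVPC's structural invariant $(sync_N,\lnot w,[w,sync_A]^n)$ --- no write-back at all, write-back before the initial $sync_N$, write-back immediately after $sync_N$, and write-back between two $[w,sync_A]$ pairs --- argues each pattern separately, and defers the remaining patterns to the proof of Proposition \ref{prop1}. You instead fix an arbitrary synced write $w$ and split only on whether some write-back occurs after $w$'s NVM record before the crash, resting on an explicitly isolated ``write-back completeness'' lemma (a write-back always flushes the full, up-to-date page, so a WRITEBACK record dominates every earlier write) together with byte-range dominance under chronological replay. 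Your two-case argument is more uniform --- it covers all interleavings at once rather than a list of patterns plus an ``other cases'' remark --- and it makes the byte-level reasoning precise where the paper stays informal (``rebuilds the whole-page data according to its design''). What the paper's pattern-based enumeration buys is that the design precondition is kept explicit: because every arbitrary-length series is preceded by a whole-page $sync_N$ with no normal writes in between, every synced write really does have an NVM record covering its range. Your proof uses this fact silently when you posit $\rho$ as ``a WRITE record covering $R$'': for a normal page-cache write $w$ whose first subsequent sync belongs to an arbitrary-length series, such a covering record exists only because of the $sync_N$ prelude, so that design constraint should be invoked rather than assumed. Your treatment of Problem \ref{prob1} (full-page records make the replay collapse to Solution \ref{sol1}'s $latest\_dev$ decision) matches the paper's ``superset'' observation in substance.
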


\begin{proof}
Solution \ref{sol2} is a superset of Solution \ref{sol1}. It is trivial to see that if we apply Solution \ref{sol2} to Problem \ref{prob1}, the cases in the proof of Proposition \ref{prop1} can still be proved in the same way. Thus for Problem \ref{prob1}, we can use Solution \ref{sol2} as a stronger substitute for Solution \ref{sol1} rather than a supplement.

Besides the cases in the proof of Proposition \ref{prop1}, we need to further examine the results for multiple arbitrary-length sync writes, because on the NVM these sync writes may not cover the whole page in the context of Problem \ref{prob2}. Note that write-back still handles all writes before it, thus only the latest write-back needs to be further considered. There are four special cases of $(w,sync,crash,r)$ left to be proved:

\begin{case}\label{case21}
$[w_0,sync_N,[w_1,sync_A]^n,crash,r], n\ge 1$. This means that write-back never happens during the workflow before the crash, and at least one pair of arbitrary-length sync writes happens before the crash. In this case, NVPC backtracks each sync record and rebuilds the whole-page data according to its design, thus each $(w,sync,crash,r)$ can be recovered, which respects Theorem \ref{theo:1}. 
\end{case}

\begin{case}\label{case22}
$[w_0,wb,sync_N,[w_1,sync_A]^n,crash,r], n\ge 1$.
This means that write-back happens before a series of arbitrary-length sync writes and their initial $sync_N$. In this case, the $(w_1,sync_A,crash,r)$ has been proved in Case \ref{case21}. So we should focus on $(w_0,sync,crash,r)$. According to Algorithm \ref{alg2}, $wb$ writes the full-page data to the disk, and marks a write-back event on the NVM, so that after the crash, the recovered arbitrary-length sync writes will be performed based on the $wb$ version. The $wb$ version data includes $w_0$, so $(w_0,sync,crash,r)$ respects Theorem \ref{theo:1}. Thus each $(w,sync,crash,r)$ in this case respects Theorem \ref{theo:1}. 
\end{case}

\begin{case}\label{case23}
$[w_0,sync_N,wb,[w_1,sync_A]^n,crash,r], n\ge 1$. This means that write-back happens just after the initial $sync_N$ of a series of arbitrary-length sync writes. In this case, $(w_0,sync,crash,r)$ respects Theorem \ref{theo:1} because, just like we proved in Case \ref{case22}, the $wb$ writes its previous writes including $w_0$, which can be recovered after crash. Other $(w_1,sync_A,crash,r)$ has been proved in Case \ref{case21}. Thus each $(w,sync,crash,r)$ in this case respects Theorem \ref{theo:1}. 
\end{case}

\begin{case}\label{case24}
$[w_0,sync_A,wb,[w_1,sync_A]^n,crash,r], n\ge 1$. This means that write-back happens between a series of arbitrary-length sync writes. In the same way as Case \ref{case22}, $wb$ makes $(w_0,sync,crash,r)$ respect Theorem \ref{theo:1}. In the same way as Case \ref{case21}, $(w_1,sync_A,crash,r)$ respects Theorem \ref{theo:1}. Thus each $(w,sync,crash,r)$ in this case respects Theorem \ref{theo:1}.
\end{case}

Proof to other cases, e.g. write-back happens after the last sync, will be the same with the proof of Proposition \ref{prop1} even under the context of Problem \ref{prob2}. According to Theorem \ref{theo:2}, now that all cases respect Theorem \ref{theo:1}, Algorithm \ref{alg2} is proved to solve Problem \ref{prob2}.
\end{proof}

\subsection{Problems from Write-back Duration}
\label{sec:dur}
\setcounter{scenario}{0}
\setcounter{case}{0}

We should be aware that the write-back process does not happen at a point of time, but is a duration of time. Due to the slow speed of the disk, the legacy storage stack adopts a queue-based design to decouple the write process from the host side and the device side. The queue mechanism can largely improve the I/O throughput, but it brings challenges to NVPC. Specifically, we can only know the start and the end time points of a write-back event, but we can't know the exact time point that the data is persisted to the device. 

A write-back process can be described as follows: First, the page is marked as write-back, and enters the I/O queue, which is the start of write-back. Then, the disk consumes the pages in the queue and persists them properly. After the page is successfully persisted, the disk signals a finish event to tell the host, which is the end of the write-back. The implementation varies, but a write-back process always obeys the following principles: 

\begin{enumerate}
    \item The real write-disk event ($wb^r$) happens after the start of write-back ($wb^s$) before the end of write-back ($wb^e$), and a page can only be in one write-back event. I.e. $(wb^s,\lnot wb,wb^r,\lnot wb,wb^e)$ is always correct. 
    \item The data written to the disk is the data version that the real write-disk event ($wb^r$) happens. Since we don't know the exact time of $wb^r$, we can only be sure that the data version of the $wb$ is not earlier than the last data before $wb^s$, and is not later than the first data after $wb^e$. 
    \item At $wb^s$ the \texttt{DIRTY} state of a page is cleared, and the \texttt{WRITEBACK} state is set, within a pair of lock / unlock. The \texttt{WRITEBACK} state is cleared at $wb^e$. During the write-back ($wb^s$ to $wb^e$), if the page is written again, the \texttt{DIRTY} state is set again, which will not be cleared until the next $wb^s$. 
\end{enumerate}

\begin{figure*}[t]
\centerline{\includegraphics{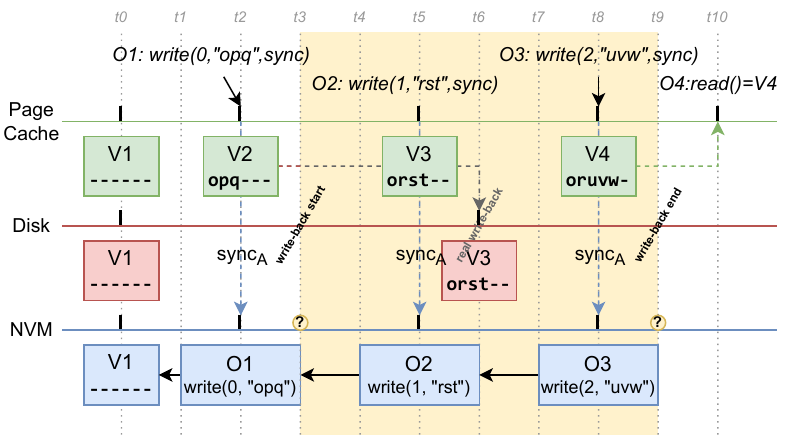}}
\caption{Space-time diagram for problematic cases caused by an uncertain write-back time. }
\label{fig3}
\end{figure*}

Fig. \ref{fig3} shows an example workflow in which the uncertain write-back time point may lead to the loss of write operations. At $t1$, both page cache, disk, and NVM reach a consistent state. 
Suppose that we only have arbitrary-length sync writes, because it's easy to see that ensuring the consistency of full-page sync is a relaxed subset of ensuring the consistency of arbitrary-length sync. 
At $t2$, $t5$, and $t8$ there are three sync writes generating $V2$, $V3$, and $V4$. At $t3$ the page encounters a write-back, but it's just the start of the write-back process. The data is written to the disk at $t6$, and $V3$ is the version written, because $O2$ happens after $wb^s$ ($t3$) before $wb^r$ ($t6$) and generate $V3$. Then at $t9$, the host is notified that the write-back is finished. Note that we can only mark the write-back event on the NVM (Algorithm \ref{alg2}) at the time of $wb^s$ or $wb^e$. Now when we inject crashes between the write-back process, some problems that we failed to discuss in Section \ref{sec:seq} and Section \ref{sec:grn} may occur.

\begin{scenario}\label{scene31}
Suppose that we crash before the real write-back, say, at $t4$. If we mark write-back on the NVM at $wb^s$ ($t3$), then until the crash time the data version that we expect to appear on the disk (no earlier than $V2$) has not been written to the disk yet, and the disk version is still $V1$. Then after the crash, if we follow Algorithm \ref{alg2} we will get \texttt{-rst--}, which is rebuilt from $V1\rightarrow O2$. Obviously, $O1$ is lost during the rebuild. So we can assume again that the write-back is marked at $wb^e$ ($t9$). Then if we crash at $t4$, the rebuild process can still generate $V2$ from $V1\rightarrow O1$, which is currently respecting Theorem \ref{theo:1}.
\end{scenario}

\begin{scenario}\label{scene32}
Suppose that we crash at $t10$ after the write-back ends, and the write-back will be marked on the NVM at $t9$, according to Scenario \ref{scene31}. Now when we rebuild the data after the crash, we will get $V3$ from the disk and nothing from the NVM, because everything on the NVM is before the write-back mark at $t9$ and is abandoned. Now it is obvious that $O3$ is lost, due to the postponement of the write-back mark. If we move the mark back to $wb^s$ ($t3$), it will work for this scenario, but will break Scenario \ref{scene31} as we just discussed. 
\end{scenario}

\begin{problem}\label{prob3}
The problems in Scenario \ref{scene31} and Scenario \ref{scene32} are caused by the uncertain write-back time between $wb^s$ and $wb^e$. Specifically, the reason is that we don't know which data version is the exact one being written to the disk, and when is it actually written. So when we choose $wb^s$ as the mark point, we may lose the data between the last write-back and $wb^s$ if the system crashes after $wb^s$ before $wb^r$. Otherwise when we choose $wb^e$ as the mark point, we may lose the data between $wb^r$ and $wb^e$ if the system crashes after $wb^e$. 
\end{problem}

\begin{solution}\label{sol3}
Since we know the start and end of a write-back, we can be sure that at least the versions before $wb^s$ can be abandoned once the write-back is successful, and the success of the write-back happens no later than $wb^e$. Thus we have the conclusion that at $wb^e$, the data before $wb^s$ is safely expired. Now when a write-back happens, we can temporarily record the time point (e.g. an auto-increment NVM record id) of $wb^s$ to prepare the data version that is going to expire. Then not until $wb^e$ will we make a real persistent record of the write-back event with the previously recorded time, indicating that eventually at this time point the data we recorded is confirmed to be useless. The process is shown in Algorithm \ref{alg3}, which is a modification to Algorithm \ref{alg2}. Maybe there are writes after $wb^s$ that are persisted by the disk at $wb^r$ (e.g. $V3$ in Fig. \ref{fig3}), there is no need to worry because NVPC ensures that data will be persisted by NVM when the page has a state of \texttt{DIRTY} or \texttt{WRITEBACK}, meaning that all sync writes between $wb^s$ and $wb^e$ will be recorded by NVM and replayed after crash. Thus persisting any data version between $wb^s$ and $wb^e$ to disk is acceptable, and any $(write,sync)$ pair crossing or inside the write-back is recoverable. 
\end{solution}

\begin{algorithm}
\caption{Solution to problem 3: record write-back events on the NVM with a proper design for $wb^s$ and $wb^e$. }
\label{alg3}

\begin{algorithmic}[1]

\Variables
    \State $prep\_rec[1\ldots page\_num]$ \Comment{Prepared write-back record for the later mark.}
    \State $page\_ver\_id[1\ldots page\_num]$ \Comment{An auto-increment id to track page version.}
\EndVariables

\Procedure{writeback}{$page$}   \Comment{This is $wb^s$.}

    \State $record.type\gets \mathrm{WRITEBACK}$ 
    \State $record.page\_index\gets page.index$
    \State $record.exp\_vid\gets page\_ver\_id[page.index]$
    \State $prep\_rec[page.index]\gets record$
    \State \Call{queue\_write\_disk}{$page$}

\EndProcedure

\Procedure{writeback\_callback}{$page$} \Comment{This is $wb^e$.}
    \State $prep\_rec[page.index].vid\gets page\_ver\_id[page.index]$
    \State $page\_ver\_id[page.index] = page\_ver\_id[page.index]+1$
    \State \Call{write\_nvm}{$prep\_rec[page.index]$}
\EndProcedure

\Procedure{sync\_write}{$page\_index$, $data$, $off$, $len$}
    
    \State $record.type\gets \mathrm{WRITE}$ 
    \State $record.page\_index\gets page\_index$
    \State $record.data\gets data$
    \State $record.off\gets off$
    \State $record.len\gets len$
    \State $record.vid\gets page\_ver\_id[page.index]$
    \State $page\_ver\_id[page.index] = page\_ver\_id[page.index]+1$
    \State \Call{write\_nvm}{$record$}

\EndProcedure

\Function{crash\_recover}{$page\_index$}
    \State $record \gets $ \Call{last\_nvm\_record\_on}{$page\_index$}
    \State $exp\_vid \gets 0$
    \State $useful\_records \gets [\;]$
    \State $i \gets 0$
    \While{$record \ne \mathrm{NULL}$ \textbf{and} $record.vid \ge exp\_vid$}
        \If{$record.type = \mathrm{WRITEBACK}$}
            \State $exp\_vid \gets record.exp\_vid$
        \Else
            \State $useful\_records[i] \gets record$
        \EndIf
        \State $i = i+1$
        \State $record \gets $ \Call{prev\_nvm\_record\_on\_same\_idx}{\allowbreak$record$}
    \EndWhile
    \State $page \gets $ \Call{read\_disk}{$page\_index$}
    \For{$i > 0$}
        \State $i = i-1$
        \State $record\gets useful\_records[i]$
        \State \Call{write\_page}{$page$, $record.data$, $record.off$, $record.len$}
    \EndFor
    \State \Return $page$
\EndFunction

\end{algorithmic}
\end{algorithm}

\begin{proposition}\label{prop3}
Solution \ref{sol3} solves Problem \ref{prob3}. 
\end{proposition}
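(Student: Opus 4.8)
The plan is to follow the same reduction used in Propositions \ref{prop1} and \ref{prop2}: by Theorem \ref{theo:2} it suffices to show that every $(w,sync,crash,r)$ respects Theorem \ref{theo:1}, and since persisting the latest retained write restores the whole page, it is enough to check the last such tuple. Because Algorithm \ref{alg3} refines Solution \ref{sol2} by changing only \emph{when} a write-back is recorded and \emph{which} records are discarded during recovery, every case already settled in the proof of Proposition \ref{prop2} — those in which the crash does not fall strictly inside a write-back interval — carries over almost verbatim, once we observe that the test \texttt{record.vid} $\ge$ \texttt{exp\_vid} discards exactly the records that Solution \ref{sol2}'s WRITEBACK marker would have cut off. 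So the real work is to handle the two new situations created by the write-back \emph{duration}: a crash with $wb^s$ before it but $wb^e$ after it, and a crash after $wb^e$.

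First I would treat the crash-during-write-back case, i.e. the situation of Scenario \ref{scene31}. The key observation is that \textsc{writeback\_callback} (that is, $wb^e$) is the only place a WRITEBACK record is flushed to NVM; if the crash precedes $wb^e$ that record is simply absent, so recovery sees no new expiry threshold and replays every sync record logged since the last \emph{completed} write-back. I would then invoke the NVPC guarantee quoted in Solution \ref{sol3}, namely that any sync write issued while the page is \texttt{DIRTY} or \texttt{WRITEBACK} is logged on NVM, together with the second write-back principle (the disk version is no earlier than the last data before $wb^s$). Combining these, whatever version $wb^r$ may or may not have deposited on disk, layering the retained NVM records in chronological order on top of it reconstructs a version no earlier than the last synced write, so the last $(w,sync,crash,r)$ respects Theorem \ref{theo:1}.

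Next I would handle the crash-after-$wb^e$ case of Scenario \ref{scene32}, where the WRITEBACK record is present with \texttt{exp\_vid} equal to the version id at $wb^s$ and \texttt{vid} equal to that at $wb^e$. Here I would argue that recovery truncates precisely the records with \texttt{vid} $<$ \texttt{exp\_vid}, i.e. those created before $wb^s$; by the second write-back principle these are all contained in the version $wb^r$ wrote to disk, hence safely expired, while every sync write from $wb^s$ onward is retained and replayed on top of the disk copy. As before this rebuilds a version containing the last synced write, so Theorem \ref{theo:1} holds; the same argument subsumes the point-event write-back cases (Cases \ref{case22}--\ref{case24}), since they are the degenerate $wb^s=wb^e$ instances.

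The main obstacle I anticipate is pinning down the correctness of the \texttt{exp\_vid} threshold rather than any single case, so I would state three facts as lemmas before the case analysis: (a) version ids are assigned in strict chronological order, so the retained records form a monotone prefix and replaying them in order lets later writes dominate earlier ones at overlapping offsets; (b) any record with \texttt{vid} $<$ \texttt{exp\_vid} is already reflected on disk, making truncation loss-free; and (c) with consecutive write-backs — which by the first write-back principle never overlap on one page — the threshold only decreases as recovery backtracks, so a stale record is at worst replayed idempotently on top of a disk copy that already contains it, never dropped while a newer overwrite survives. Establishing (c), the interaction between successive write-backs and the unknown position of each $wb^r$, is where the delicate reasoning lies; everything else reduces to the bookkeeping already validated for Solutions \ref{sol1} and \ref{sol2}.
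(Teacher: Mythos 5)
Your proposal is correct and follows essentially the same route as the paper's proof: both reduce via Theorem \ref{theo:2} to a case analysis on where the crash falls relative to $wb^s$, $wb^r$, and $wb^e$, and both rest on the same two key observations — that no WRITEBACK record exists on NVM before $wb^e$ (so a crash inside the write-back interval causes recovery to replay everything since the last completed write-back, idempotently if $wb^r$ has already run), and that the records truncated by the $exp\_vid$ threshold after $wb^e$ are exactly those already contained in the disk version deposited at $wb^r$. The only cosmetic difference is that the paper explicitly isolates the special case $[wb^s,w,wb^e,sync,crash,r]$ (its Case \ref{case35}), whereas you cover it implicitly through the re-dirty/logging guarantee and the fact that such a write's record receives a vid above the threshold.
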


\begin{proof}
For Problem \ref{prob3} and Solution \ref{sol3}, we can generate a sequence with atomic sync writes before $wb^s$, between $wb^s$ and $wb^r$, between $wb^r$ and $wb^e$, and after $wb^e$: $[w_1,sync,wb^s,w_2,sync,wb^r,w_3,sync,wb^e,w_4,sync]$. 
We inject $[crash,r]$ among each pair of operations to check if Solution \ref{sol3} works well. We also need to deal with the special case that a $wb^e$ happens between a pair of $[w,sync]$: $[w,wb^e,sync,crash,r]$. Then in the same way it is easy to prove that $wb^s$ or $wb^r$ inserted between $[w,sync]$ will not cause any fault. All the cases to be proved are listed below:

\begin{case}\label{case31}
Crashing before $wb^s$ is the same as the cases we discussed in the proof of Proposition \ref{prop1} and Proposition \ref{prop2}.
\end{case}

\begin{case}\label{case32}
Crashing between $[wb^s,w_2]$ generates $[w_1,sync,wb^s,crash,r]$. According to Algorithm \ref{alg3}, though we have prepared to mark $w_1$ as expired, since $wb^e$ never happens, the write-back record doesn't exist on the NVM. So when we rebuild the data, we use the previous disk data and the sync write records on the NVM, like $wb^s$ never happens. In this case, $(w_1,sync,crash,r)$ obviously respects Theorem \ref{theo:1}. In the same way we can prove that crashing between $[w_2,wb^r]$ respects Theorem \ref{theo:1}.
\end{case}

\begin{case}\label{case33}
Crashing between $[wb^r,w_3]$ generates $[w_1,sync,wb^s,w_2,sync,wb^r,crash,r]$. According to Algorithm \ref{alg3}, the NVM write-back record fails to be persisted, but the data version before $wb^r$, including $w_1$ and $w_2$, is written to the disk successfully. When we rebuild the data, we will replay all previous writes since the last write-back onto the disk data version that already contains them. This does not violate the sync semantics, because we are actually replaying all operations again, which contains each sync write and respects Theorem \ref{theo:1}. 
In the same way we can prove that crashing between $[w_3,wb^e]$ respects Theorem \ref{theo:1}.
\end{case}

\begin{case}\label{case34}
Crashing after $w_4$ generates 
$[w_1,\allowbreak sync,\allowbreak wb^s,\allowbreak w_2,\allowbreak sync,\allowbreak wb^r,\allowbreak w_3,\allowbreak sync,\allowbreak wb^e,\allowbreak w_4,\allowbreak sync,\allowbreak crash,\allowbreak r]$. 
According to Algorithm \ref{alg3}, after the recovery NVPC thinks that it has the data version before $wb^s$ on the disk, which includes $w_1$. But actually the data version on the disk includes both $w_1$ and $w_2$. The write operations $w_2$, $w_3$, and $w_4$ are persisted by the NVM, and will be replayed in sequence to the version on the disk. Though $w_2$ is replayed unnecessarily, it still respects Theorem \ref{theo:1}. Since all writes are persisted and replayed, each $(w,sync,crash,r)$ respects Theorem \ref{theo:1}. In the same way we can prove that crashing between $[wb^e,w_4]$ respects Theorem \ref{theo:1}.
\end{case}

\begin{case}\label{case35}
$[wb^s,w,wb^e,sync,crash,r]$. In this case, the write between $wb^s$ and $wb^e$ will re-dirty the page, so the sync can still persist it to NVM, and after the crash we can rebuild it because it happens after $wb^s$. Thus $(w,sync,crash,r)$ in this case respects Theorem \ref{theo:1}.
\end{case}

According to Theorem \ref{theo:2}, now that all cases respect Theorem \ref{theo:1}, Algorithm \ref{alg3} is proved to solve Problem \ref{prob3}.
\end{proof}

\section{Conclusion}

Heterogeneous storage devices have different characteristics that make system designers hard to rein. In this work, we take NVPC as the research target and discuss the difficulties of its data consistency, including the problems from the heterogeneity of write sequence, granularity, and mode, along with the solutions to the proposed problems.



\bibliographystyle{plain}

\vspace{12pt}

\end{document}